\newcommand*\Input[1]{\Statex \textbf{Input:} #1}
\newcommand*\Output[1]{\Statex \textbf{Output:} #1}
\algrenewcommand\alglinenumber[1]{#1}
\newtheoremstyle{def_style}
  {}          
  {}          
  {}          
  {}          
  {\bfseries} 
  {.}         
  {.5em}      
  {}          
\theoremstyle{def_style}
\newtheorem{define}{Definition}
\theoremstyle{def_style}
\newtheorem*{prob}{Problem Statement}
\theoremstyle{def_style}
\newtheorem{guarantee}{Guarantee}
\theoremstyle{def_style}
\newtheorem{lemma}{Lemma}
\theoremstyle{def_style}
\newtheorem{proposition}{Proposition}
\DeclareMathOperator*{\argmin}{arg\,min}
\let\emptyset\varnothing
\begin{document}


\title{\Large Error-bounded Approximate Time Series Joins Using\\Compact Dictionary Representations of Time Series}

\date{}

\author{Chin-Chia Michael Yeh\thanks{Visa Research.} \and Yan Zheng$^*$ \and Junpeng Wang$^*$ \and Huiyuan Chen$^*$ \and Zhongfang Zhuang$^*$ \and Wei Zhang$^*$ \and Eamonn Keogh\thanks{University of California, Riverside.}}

\maketitle

\fancyfoot[R]{\scriptsize{Copyright \textcopyright\ 2022 by SIAM\\
Unauthorized reproduction of this article is prohibited}}

\begin{abstract} \small\baselineskip=9pt
The \textit{matrix profile} is an effective data mining tool that provides similarity join functionality for time series data.
Users of the matrix profile can either join a time series with itself using \textit{intra}-similarity join (i.e., \textit{self-join}) or join a time series with another time series using \textit{inter}-similarity join.
By invoking either or both types of joins, the matrix profile can help users discover both conserved and anomalous structures in the data.
Since the introduction of the matrix profile five years ago, multiple efforts have been made to speed up the computation with approximate joins; however, the majority of these efforts only focus on self-joins.
In this work, we show that it is possible to efficiently perform approximate inter-time series similarity joins with error bounded guarantees by creating a compact ``dictionary" representation of time series.
Using the dictionary representation instead of the original time series, we are able to improve the throughput of an anomaly mining system by at least 20X, with essentially no decrease in accuracy.
As a side effect, the dictionaries also summarize the time series in a semantically meaningful way and can provide intuitive and  actionable insights.
We demonstrate the utility of our dictionary-based inter-time series similarity joins on domains as diverse as medicine and transportation.
\end{abstract}

\noindent \textbf{Keywords:}
time series, matrix profile, similarity joins

\section{Introduction}
The \textit{matrix profile} is a versatile and efficient time series data mining tool that helps to solve a variety of problems such as time series motif/discord discovery~\cite{yeh2016matrix}, time series semantic segmentation~\cite{gharghabi2017matrix}, and time series shapelet discovery~\cite{zhu2020swiss}.
It addresses the aforementioned problems by performing either or both types of joins defined for it: \textit{intra}-time series similarity join and \textit{inter}-time series similarity join~\cite{yeh2018time}.
The intra-similarity join (commonly called the \textit{self-join}) captures the nearest neighbor relationship among subsequences within a given time series; inter-similarity join captures the nearest neighbor relationship from one set of subsequences to another set, where each subsequence set comes from different time series.
While there are multiple scalable methods proposed for approximated intra-similarity joins~\cite{zhu2018matrix,zimmerman2019matrix}, to date there is surprisingly little work focused on the approximated inter-similarity join.

In this work, we propose a fast approximated inter-time series similarity join method with guaranteed error bounds.
The proposed method has two phases: a learning phase and an inference phase.
During the learning phrase, we build a compact dictionary to capture local patterns within the time series.
Our ability to create compact, yet faithful dictionaries for a dataset exploits the fact that many time series datasets contain significant redundancies.
Indeed, as hinted at in Fig.~\ref{fig:compress}, the \textit{inverse} of the area under the curve of a matrix profile (see Definition~\ref{def:matrix_pro}) can be seen as a measure of this redundancy.
In the inference phase, we can efficiently join any given time series with the dictionaries  instead of the original database.
As the dictionary is much smaller than the original time series, the similarity join can be performed in a fraction of time compared to joining with the original data.

\begin{figure}[htbp]
\centerline{
\includegraphics[width=0.99\linewidth]{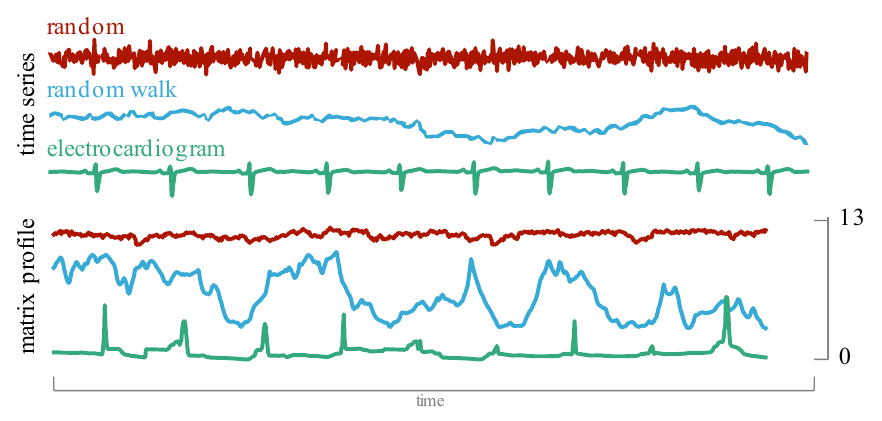}
}
\caption{
\footnotesize
\textit{top-panel}) top to bottom, a random time series, a random walk time series, and ten cardiac cycles of a patient from MIT-BIH Long-Term ECG Database~\cite{goldberger2000physiobank}.
\textit{bottom-panel}) The matrix profiles of the three time series.
The inverse (negative) of the area under the curve for each matrix profile can be seen as a measure of ``compressibility" of each time series.
Note that the three matrix profiles are shown with a commensurate scale to emphasize that it really is the case that the closest pair (the top-1 motif) from the random time series is further apart than the furthest pair (the top-1 discord) from either the random walk or cardiac cycle time series.
}
\label{fig:compress}
\end{figure}

While a faster inter-similarity join will be beneficial whenever joins are performed for any time series data mining tasks~\cite{zhu2020swiss}, here we consider a particular use case where the proposed method plays the central role in improving the performance of the data mining system.
The matrix profile has been shown to allow a simple yet effective time series anomaly detection method~\cite{wu2020current,anton2018time}.
In particular, the highest point in a matrix profile is called the \textit{time series discord}.
By definition, discords are not similar to any other subsequence, and as such often correspond to anomalies~\cite{chandola2009anomaly}.
Virtually all work demonstrating the utility of discords as anomaly detectors consider only the batch case, our work allows the possibility of the online discord discovery in fast arriving streams.
Specifically, we envision labeling an archived dataset as ``normal", and as new data arrives, comparing it to the normal data, flagging subsequences that cannot find a sufficiently close nearest neighbor.

For each arriving query subsequence to the system during detection, the system needs to join the query with the entire normal dataset.
This may be tenable when we have a small normal dataset and a low throughput, but for many real-world applications it is simply untenable.
If we use our proposed approximated inter-similarity join instead, we could reduce the inference time to precisely match the available computational resources.
As we will demonstrate, this allows us to consider sampling rates that are up to 20 times faster, with no appreciable reduction in accuracy.
While our main goal in this work is to provide fast inter-similarity joins, the proposed dictionary representation also brings the benefit of summarizing a time series.
As we will show in Section~\ref{sec:summary}, these summarizations can provide intuitive and actionable insights in datasets where manual inspection by scrolling and zooming interactions would be untenable.

\section{Background and Related Work}
\label{sec:background}
In this section, we begin by introducing the definitions and notation, then we discuss the related work.

\subsection{Definitions and Notation.}
\label{sec:definition}
We begin by defining the data type of interest, \textit{time series}:

\begin{define}
    A \textit{time series} $T \in \mathbb{R}^{n}$ is a sequence of real valued numbers $t_i \in \mathbb{R}:T=\left[t_1, t_2, ..., t_n \right]$ where $n$ is the length of $T$.
\end{define}

For time series data mining tasks, we usually are not interested in \textit{global}, but \textit{local} properties of a time series.
Thus the time series similarity join is performed on local \textit{subsequences} of time series~\cite{yeh2018time}.

\begin{define}
    A \textit{subsequence} $T_{i,m} \in \mathbb{R}^{m}$ of a time series~$T$ is a length~$m$ contiguous subarray of $T$ starting from position~$i$.
    Formally, $T_{i,m}=[t_{i}, t_{i+1}, ..., t_{i+m-1}]$.
\end{define}

We are interested in a similarity join of all subsequences of a given time series.
We define an \textit{all-subsequences set} of a given time series as a set that contains all possible subsequences from the time series.

\begin{define}
    An \textit{all-subsequences set}~$\mathbf{A}_m$ of a time series~$T$ is an ordered set of all possible subsequences of $T$ obtained by sliding a window of length~$m$ across $T: \mathbf{A}_m =\{T_{1,m}, T_{2,m},..., T_{n-m+1,m}\}$, where $m$ is a user-defined subsequence length.
    We use $\mathbf{A}_m[i]$ to denote $T_{i,m}$.
\end{define}

We can take any time series of length~$m$ and compute its distance with each subsequence~$\mathbf{A}_m[i] \in \mathbf{A}_m$.
We use the \textit{distance profile} to store the result distances.

\begin{define}
    A \textit{distance profile}~$S$ is a vector of the distances (or similarities) between a given query $Q \in \mathbb{R}^{m}$ and each subsequence~$\mathbf{A}_m[i]$ in an all-subsequences set~$\mathbf{A}_m$ of $T$.
\end{define}

The most common \textit{distance} function used for distance profile computation is the $z$-normalized Euclidean distance~\cite{mass}.
Some communities (especially seismology~\cite{senobari2019super}) prefer to work with bounded similarity functions and use the Pearson correlation coefficient instead~\cite{mass}.
The distance profile can be considered as a \textit{meta} time series that annotates the time series $T$ that was used to generate all-subsequences set~$\mathbf{A}_m$.
Distance profiles are frequently used in conjunction with the time series similarity join when analyzing time series data~\cite{zhu2020swiss}.
An example of distance profile is illustrated in Fig.~\ref{fig:distance_pro}.

\definecolor{query}{HTML}{339966}
\definecolor{time_series}{HTML}{990000}
\definecolor{distance_pro}{HTML}{0099CC}
\definecolor{similar}{HTML}{FFCC00}
\definecolor{less_similar}{HTML}{663366}

\begin{figure}[htbp]
\centerline{
\includegraphics[width=0.99\linewidth]{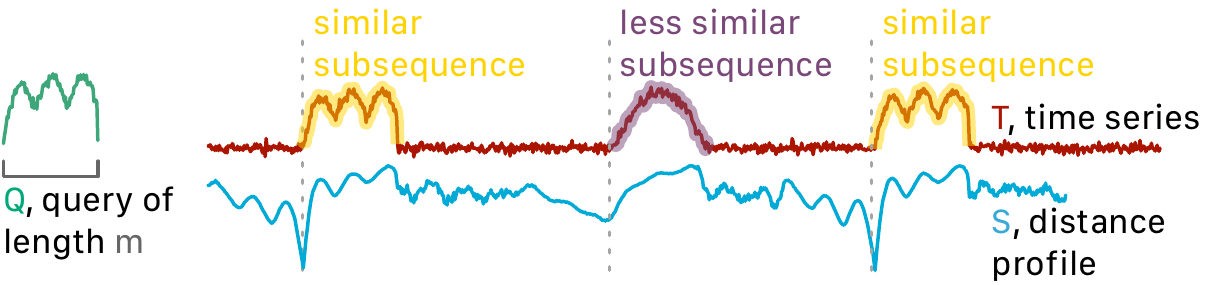}
}
\caption{
\footnotesize
A shorter time series~\textcolor{query}{$Q$} queries the time series~\textcolor{time_series}{$T$} by computing the distance with every subsequence in~\textcolor{time_series}{T}.
The result distances are stored in a vector~\textcolor{distance_pro}{$S$} called \textit{distance profile}.
Subsequences with distinct shape features are highlighted.
The subsequences more similar to the query are highlighted with \textcolor{similar}{yellow}, and the less similar one is highlighted with~\textcolor{less_similar}{purple}.
}
\label{fig:distance_pro}
\end{figure}

Because time series similarity joins concern the nearest neighbor (\textit{1NN}) relation between subsequences based on the distance (or similarity) between them; we define a \textit{1NN}-join function which indicates the nearest neighbor relation between the two input subsequences.

\begin{define}
    Given two all-subsequences sets~$\mathbf{A}_m$ and~$\mathbf{B}_m$ and two subsequences $\mathbf{A}_m[i]$ and $\mathbf{B}_m[j]$, a \textit{1NN-join function} $\theta_{\textit{1NN}}(\mathbf{A}_m[i], \mathbf{B}_m[j])$ is a Boolean function which returns $\texttt{True}$ only if $\mathbf{B}_m[j]$ is the nearest neighbor of $\mathbf{A}_m[i]$ in the set~$\mathbf{B}_m$.
\end{define}

With the defined join function, a \textit{similarity join set} can be generated by applying the similarity join operator on two input all-subsequences sets.

\begin{define}
Given all-subsequences sets $\mathbf{A}_m$ and $\mathbf{B}_m$ of time series~$T_A$ and $T_B$, a similarity join set $\mathbf{J}_{AB,m}$ of $\mathbf{A}_m$ and $\mathbf{B}_m$ is a set containing pairs of each subsequence in $\mathbf{A}_m$ with its nearest neighbor in $\mathbf{B}_m: \mathbf{J}_{AB,m}=\{\langle \mathbf{A}_m[i], \mathbf{B}_m[j] \rangle |\theta_{\textit{1NN}}(\mathbf{A}_m[i], \mathbf{B}_m[j])\}$.
\end{define}

We measure the distance (or similarity) between each pair within a similarity join set and store the result into a vector, the \textit{matrix profile}.

\begin{define}
\label{def:matrix_pro}
Given time series~$T_A$, $T_B$, and subsequence length~$m$, a \textit{matrix profile}~$P_{AB}$ is a vector of the distances (or similarities) between each pair in $\mathbf{J}_{AB,m}$.
We formally denote this operation as $P_{AB} = T_A\bowtie_{\theta_{\textit{1NN}}, m}T_B$.
\end{define}

\definecolor{time_series_a}{HTML}{990000}
\definecolor{time_series_b}{HTML}{FFCC00}
\definecolor{discord}{HTML}{663366}

\begin{figure}[htbp]
\centerline{
\includegraphics[width=0.99\linewidth]{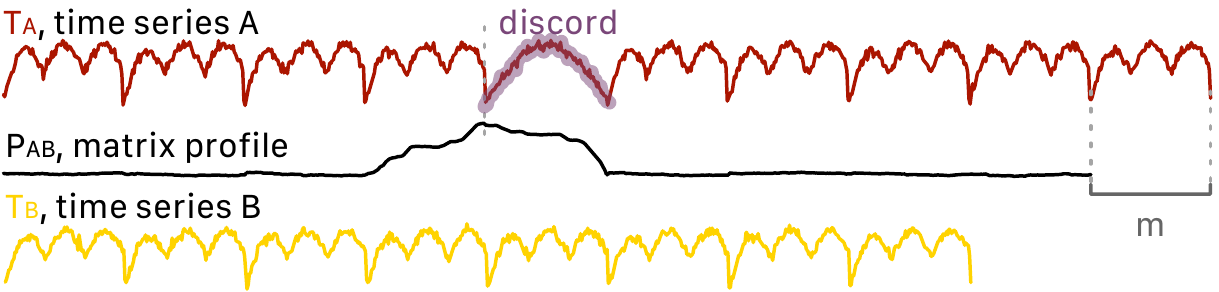}
}
\caption{
\footnotesize
$P_{AB}$ is the matrix profile output by $\textcolor{time_series_a}{T_A}\bowtie_{\theta_{\textit{1NN}}, m}\textcolor{time_series_b}{T_B}$.
$P_{AB}$ reveals the \textcolor{discord}{discord}, i.e., a unique shape feature to \textcolor{time_series_a}{$T_A$} that does not occur in \textcolor{time_series_b}{$T_B$}.
}
\label{fig:matrix_pro}
\end{figure}

Similar to distance profile, the matrix profile can also be considered as a \textit{meta} time series that annotates a time series.
For example, if the all-subsequences set $\mathbf{A}_m$ is extracted from time series $T_A$ and the all-subsequences set $\mathbf{B}_m$ is extracted from time series $T_B$, because we are finding the nearest neighbor for each subsequence in $T_A$, the result matrix profile annotates $T_A$ with each subsequence's nearest neighbor distance in $T_B$.
The matrix profile~$P_{AB}$ reveals the similarities and differences from $T_A$'s perspective as shown in Fig.~\ref{fig:matrix_pro}.

Since $T_A$ and $T_B$ can either be the same time series (i.e., \textit{intra}-time series similarity join) or two different time series (i.e., \textit{inter}-time series similarity join), there are two types of time series similarity join algorithms.
The computation of intra-similarity join matrix profile (either exact or approximate) has already been extensively studied in prior works~\cite{zimmerman2019matrix,zhu2018matrix,zimmerman2019matrixxiv}.
Here we focus on the problem of computing approximate inter-similarity join matrix profile.

\begin{prob}
Given a time series~$T_B$, we want to find a compact dictionary representation~$\mathbf{D}_B$ for~$T_B$ such that $||P_{AB} - \hat{P}_{AB}||$ is minimized for any given time series~$T_A$ where $P_{AB} = T_A\bowtie_{\theta_{\textit{1NN}}, m}T_B$, $\hat{P}_{AB} = T_A\bowtie_{\theta_{\textit{1NN}}, m}\mathbf{D}_B$, and $||\cdot||$ denotes vector norm.
\end{prob}

In addition to optimizing the measure of success defined in the problem statement, we also wish to ensure our approximate algorithm provides the following two performance guarantees: \textit{no false negative guarantee} and \textit{max error guarantee}.
For a discord-based anomaly detection system~\cite{wu2020current}, it is important that the approximate solution captures all potential time series discords (i.e., anomalies).
In other words, the detection system needs to guarantee a recall of $100\%$ (i.e., no false negatives).
To achieve this, the approximated nearest neighbor distance for any subsequence in $T_A$ cannot be smaller than the true nearest neighbor distance.

\begin{guarantee}
\label{gua:false}
Given that $P_{AB}$ and $\hat{P}_{AB}$ are defined using distance measure, the \textit{no false negative guarantee} requires $\hat{P}_{AB}[i] \geq P_{AB}[i]$ for all $i$ and for any given~$T_A$.
\end{guarantee}

Note, if $P_{AB}$ and $\hat{P}_{AB}$ are defined using similarity measure like Pearson correlation coefficient, $\hat{P}_{AB}[i] \leq P_{AB}[i]$ for all~$i$.

This requirement is essentially the same as the lower bounding lemma of the GEMINI~\cite{agrawal1993efficient}.
Satisfying this property allows many downstream higher-level algorithms to reason with the approximate results, yet produces \textit{exact} answers with respect to the original data.

Another desired guarantee for any approximate algorithm is an error-bound to the approximation.

\begin{guarantee}
\label{gua:error}
Given $T_B$ and $\mathbf{D}_B$, the \textit{max error guarantee} requires the existence of an error bound $e_{\text{max}} \in \mathbb{R}$ such that $|\hat{P}_{AB}[i] - P_{AB}[i]| \leq e_{\text{max}}$ for all~$i$ and for any given~$T_A$.
\end{guarantee}

Such error-bound can either be used in a stopping criterion if the dictionary is built greedily or provide confidence to the approximate solution for downstream tasks.
Let us consider time series discord\footnote{Time series discord is defined as the subsequence with the largest matrix profile value within a given time series~\cite{yeh2016matrix}.} discovery as an example.
If the largest approximate matrix profile value~$\hat{P}_{AB}[i]$ and the second largest approximate matrix profile value~$\hat{P}_{AB}[j]$ have a difference greater than~$e_{\text{max}}$, then the subsequence associated with~$\hat{P}_{AB}[i]$ is guaranteed to be the time series discord of~$T_A$.
To demonstrate why the statement is true, given Guarantee~\ref{gua:false}, we define the error for~$\hat{P}_{AB}[i]$ and~$\hat{P}_{AB}[j]$ as:

\vspace{-0.5em}
\begin{equation}
\begin{aligned}
    e_i &\triangleq \hat{P}_{AB}[i] - P_{AB}[i] \\
    e_j &\triangleq \hat{P}_{AB}[j] - P_{AB}[j]
\end{aligned}
\label{eq:error1}
\end{equation}
\vspace{-0.5em}

With algebraic manipulation, Eq.~\ref{eq:error1} becomes:

\vspace{-0.5em}
\begin{equation}
\begin{aligned}
    P_{AB}[i] &= \hat{P}_{AB}[i] - e_i \\
    P_{AB}[j] &= \hat{P}_{AB}[j] - e_j
\end{aligned}
\label{eq:error2}
\end{equation}
\vspace{-0.5em}

Because of Guarantee~\ref{gua:error}, both $e_i$ and $e_j$ are non-negative numbers bounded above by~$e_{\text{max}}$.
Given that $\hat{P}_{AB}[i] - \hat{P}_{AB}[j] > e_{\text{max}}$, $P_{AB}[j]$ can only be greater than~$P_{AB}[i]$ if $e_i > e_{\text{max}}$ even if $e_i$ is $0$.
However, since $e_i > e_{\text{max}}$ contradicts Guarantee~\ref{gua:error}, $P_{AB}[i]$ must be the time series discord of~$T_A$.

\subsection{Related Work.}
The matrix profile has attracted significant attention since its introduction in 2016~\cite{yeh2016matrix}.
There are now dozens of works showing how it can be employed in diverse domains.
For example, a recent paper by a NASA team applied it to ``\textit{observations of the magnetosphere collected by the Cassini spacecraft in orbit around Saturn}", noting that ``the best-performing method was the Matrix Profile"~\cite{daigavane2020detection}.
Given its track record of successful deployments, there are a number of papers on accelerating the matrix profile, either by algorithmic insights~\cite{zhu2018matrix,zimmerman2019matrix} or the use of high-performance hardware~\cite{zimmerman2019matrixxiv}.
Thus far, most of these optimizations have focused only on the self-join case.

Moreover, most of these efforts are focused on the matrix profiles ability to discover \textit{motifs}.
However, as we noted in Guarantee~\ref{gua:error} above, if our main focus is on time series \textit{discords}, under some circumstances we can exploit the \textit{approximate} results from our dictionary join to produce an overall \textit{exact} answer to discord questions.

The main use of time series discords is as an intuitive anomaly detector.
The recent explosion in research on anomaly detection makes it a difficult area to survey.
However, recent works including~\cite{anton2018time, wu2020current,daigavane2020detection,nakamura2020merlin} suggest that time series discords are at least competitive with other state-of-the-art techniques.
In a sense this is unsurprising, anomaly detection typically suffers from a paucity of labeled data (as little as none), and anomaly detectors with more than ten parameters have been proposed~\cite{wu2020current}.
With this imbalance between data availability and representational expressiveness, overfitting is very hard to prevent.
In contrast, time series discords require only one intuitive parameter, the subsequence length.

Finally, the idea of dictionary learning (or more generally \textit{numerosity reduction}) has been explored for strings, graphs, images etc.
In addition, there have been some applications to atomic time series (i.e. individually extracted heartbeats or gait cycles)~\cite{yazdi2018time}, and methods~\cite{marascu2014tristan,khelifati2019corad} that do not provide guarantees.
However, we are not aware of any research that considers long, unsegmented time series, and/or which offer error guarantees.

\section{Algorithm}
\label{sec:algorithm}
The proposed approximate inter-time series similarity join algorithm has two stages: 1) dictionary learning and 2) similarity join.
The dictionary learning algorithm constructs a compact dictionary representation from time series in the first stage; then, any incoming time series subsequences can efficiently join with the dictionary instead of the original time series in the second stage.
We introduce the dictionary learning algorithm and the join process in Section~\ref{sec:dictionary} and Section~\ref{sec:similarity}, respectively.
As we established in Section~\ref{sec:definition}, it is important that the proposed method provides the performance guarantees.
We show the proposed method provides the necessary assurances in Section~\ref{sec:guarantee}.

\subsection{Dictionary Learning.}
\label{sec:dictionary}
Our dictionary learning algorithm is designed based on two observations: 1) time series motifs tend to be the most representative subsequences and 2) subsequences that are similar to the ones already added to the dictionary should be avoided, as they only provide marginal gains in reducing the overall error.
Algorithm~\ref{alg:dictionary} shows the pseudocode of the proposed dictionary learning algorithm.
The inputs to the algorithm are time series~$T_B$ and subsequence length~$m$.

\begin{algorithm}[ht]
    \centering
    \caption{Dictionary Learning\label{alg:dictionary}}
    \footnotesize
    \begin{algorithmic}[1]
        \Input{time series~$T_B$, subsequence length~$m$}
        \Output{dictionary~$\mathbf{D}_B$}
        \Function{LearnDictionary}{$T_B, m$}
        \State $n \gets \textsc{ GetArraySize}(T_B)$
        \State $P_B \gets T_B\bowtie_{\theta_{\textit{1NN}}, m}T_B$
        \State $S \gets \textsc{ GetZeroArrayOfSize}(n - m + 1)$
        \State $\mathbf{D}_B \gets \emptyset$
        \While{\texttt{True}}
        \State $P'_{B} \gets P_B - S$
        \For{\textbf{each} $D_B$ \textbf{in} $\mathbf{D}_B$}
        \State $i \gets \textsc{ GetStartIndex}(D_B)$
        \State $P'_{B}[i - \frac{m}{2}:i + \frac{m}{2}] \gets \infty$
        \EndFor
        \State $j \gets \argmin{P'_{B}}$
        \State $\mathbf{D}_B \gets T_B[j:j + m]$ \Comment{adding new element to $\mathbf{D}_B$}
        \If{\texttt{terminal condition} \textbf{is} \texttt{True}}
        \State \textbf{break}
        \EndIf
        \State $S' \gets \textsc{ GetDistanceProfile}(T_B, T_B[j: j + m])$
        \If{$S$ \textbf{is} \texttt{zero vector}}
        \State $S \gets S'$
        \Else
        \State $S \gets \textsc{ ElementwiseMin}(S, S')$
        \EndIf
        \EndWhile
        \State \Return $\mathbf{D}_B$
        \EndFunction
    \end{algorithmic}
\end{algorithm}

In line 2, the time series length~$n$ is inferred from the time series~$T_B$.
In line 3, the intra-similarity join matrix profile~$P_B$ is extracted from~$T_B$ with subsequence length~$m$.
The particular algorithm we used for the similarity join is based on~\cite{zimmerman2019matrixxiv} in our implementation, and it is possible to use other matrix profile implementations such as \texttt{SCRIMP++}~\cite{zhu2018matrix} for an even faster approximate solution.
Next, in line 4 and line 5, a temporary variable storing the merged distance profile~$S$ and the dictionary~$\mathbf{D}_B$ are initialized as a vector of zeros and an empty set, respectively.
The merged distance profile~$S$ is used to store the inter-similarity join between $\mathbf{D}_B$ and $T_B$.

The main loop of the algorithm starts from line 6 through line 19.
In line 7, the intra-similarity join matrix profile is further processed to highlight the subsequences less similar to the dictionary elements.
Because~$S$ stores the inter-similarity join between $\mathbf{D}_B$ and $T_B$, if subsequence~$i$ is less similar to the dictionary elements comparing to subsequence~$j$ (i.e., $S[i] > S[j]$), $P_B-S$ reduces the matrix profile value associated with subsequence~$i$ more comparing to subsequence~$j$.
We refer to the output of line 7 as the processed matrix profile~$P'_B$.

\definecolor{query}{HTML}{339966}
\definecolor{time_series}{HTML}{990000}
\definecolor{distance_pro}{HTML}{0099CC}

\begin{figure}[htbp]
\centerline{
\includegraphics[width=0.99\linewidth]{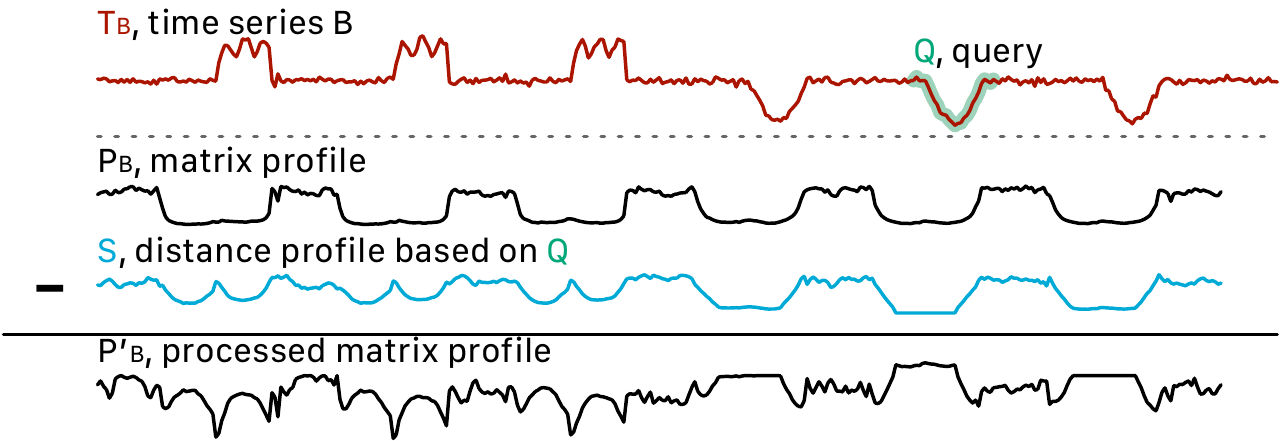}
}
\caption{
\footnotesize
The intra-similarity join matrix profile~$P_B$ is extracted from~\textcolor{time_series}{$T_B$}, and the distance profile~\textcolor{distance_pro}{$S$} is computed by querying~\textcolor{time_series}{$T_B$} with~\textcolor{query}{$Q$} (i.e., the subsequence added to the dictionary).
The processed matrix profile~$P'_B$ is computed by $P_B-\textcolor{distance_pro}{S}$.
$P'_B$ reveals repeated patterns dissimilar to~\textcolor{query}{$Q$}.
}
\label{fig:process_mp}
\end{figure}

We use Fig.~\ref{fig:process_mp} to visually demonstrate the importance of line 7 where the query~$Q$ is the subsequence added to the dictionary in the first iteration (i.e., $T_B[j:j+m]$ from the first iteration) and we are currently at line 7 in the second iteration.
If we extract the subsequence associated with the lowest value in~$P_B$, we may extract the \textit{V-shaped pattern} (\includegraphics[height=1.1em]{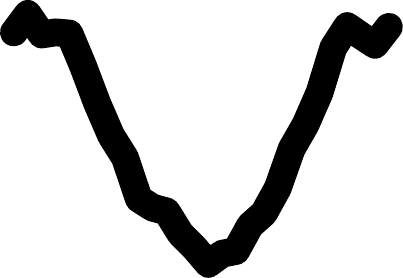}), which is similar to $Q$, instead of the novel \textit{crown-shaped pattern} (\includegraphics[height=1.1em]{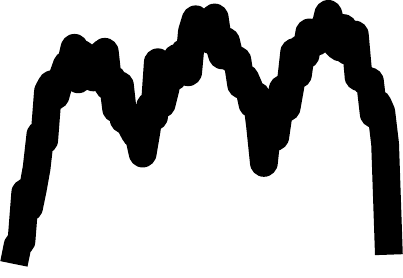}).
However, after~$S$ (from the first iteration) is subtracted from~$P_B$, the lowest value of the processed matrix profile~$P'_B$ will lead us to the crown-shaped pattern.

From line 8 to line 10, subsequences associated with the subsequences already added to the dictionary are removed from~$P'_B$ by replacing their corresponding matrix profile values with infinity.
The $\pm \frac{m}{2}$ are the exclusion zone for removing trivial matches~\cite{yeh2016matrix} of each $D_B \in \mathbf{D}_B$.
In line 11, the index associated with the best candidate for adding to the dictionary is identified.
As we already demoted the subsequences similar to the dictionary elements in line 7 and removed the dictionary elements in line 10, the minimum value in~$P'_B$ corresponds to a time series motif that has not yet been added to the dictionary.

\begin{figure}[htbp]
\centerline{
\includegraphics[width=0.99\linewidth]{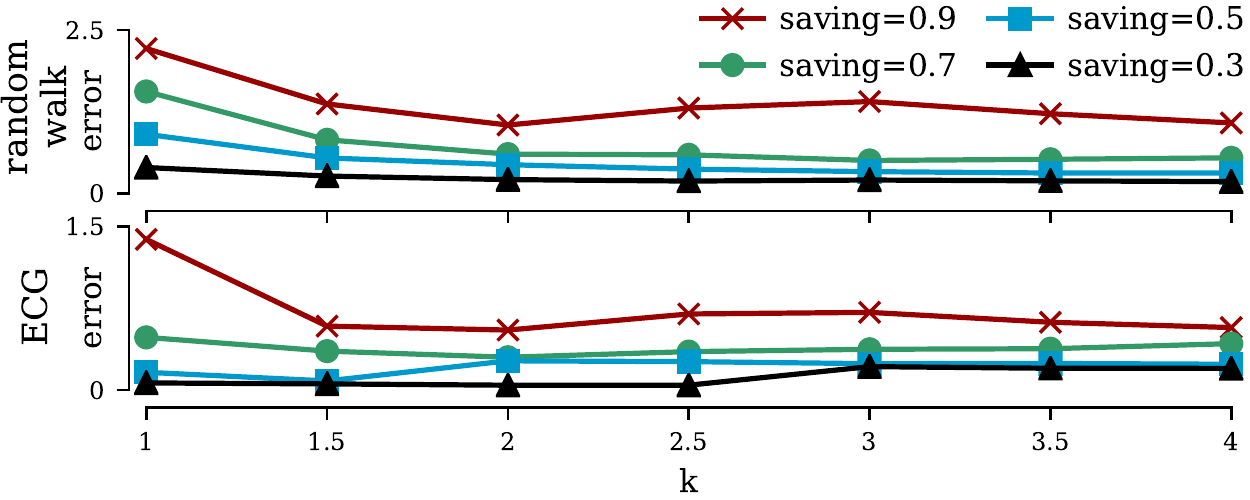}
}
\caption{
\footnotesize
The approximation error for different settings of context window factor~$k$ on random walk time series and heartbeat (electrocardiogram/ECG) time series under different space saving factors.
}
\label{fig:context_k}
\end{figure}

The discovered subsequence is added to the dictionary in line 13.
The user has the option to add additional data points preceding and following the selected subsequence as context to the dictionary.
A larger context window helps capturing different shifts of a pattern when~$T_B$ is highly periodic.
In our implementation, the length of the context window is parameterized by the hyper-parameter~$k$ where the length of context window is computed by $km$.
Fig.~\ref{fig:context_k} shows how~$k$ could affect the quality of the approximation.
We tested on both random walk time series and heartbeat time series under different space saving factors (i.e., $1 - \frac{\text{compressed size}}{\text{uncompressed size}}$).
We measure the quality of the dictionary constructed using different $k$ by computing the average Euclidean distance between the oracle matrix profile and the approximate matrix profile (i.e., average error) when joining the dictionary with $1,000$ random walk time series.
In general, setting~$k$ in $[1.5, 2]$ gives the most accurate approximate matrix profile, and the proposed method is \textit{not} sensitive to $k$ if $k$ is in $[1.5, 4]$.
For our experiments, we simply fixed $k$ to 1.5.

\definecolor{d0}{HTML}{990000}
\definecolor{d1}{HTML}{339966}
\definecolor{d01}{HTML}{FFCC00}

\begin{figure}[htbp]
\centerline{
\includegraphics[width=0.85\linewidth]{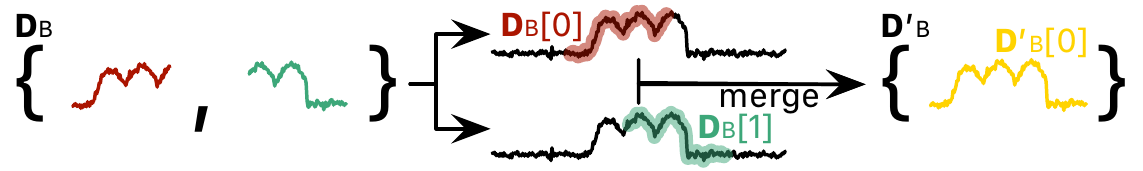}
}
\caption{
\footnotesize
Overlapping dictionary elements (i.e., \textcolor{d0}{$\mathbf{D}_B[0]$} and \textcolor{d1}{$\mathbf{D}_B[1]$}) are merged into one element~\textcolor{d01}{$\mathbf{D}'_B[0]$} for compact storage of dictionary elements.
}
\label{fig:dict_storage}
\end{figure}

To compactly store the elements of dictionary, we merged overlapping elements as shown in Fig.~\ref{fig:dict_storage}.
As shown in this figure, because dictionary element~$\mathbf{D}_B[0]$ overlapped with dictionary element~$\mathbf{D}_B[1]$ in the input time series, we store a longer subsequence (i.e., $\mathbf{D}'_B[0]$) covering both $\mathbf{D}_B[0]$ and $\mathbf{D}_B[1]$ so that the overlapping region is not stored twice.
We use $\mathbf{D}_B[i]$ to denote the $i$th element in~$\mathbf{D}_B$.

In line 13 and line 14, the terminal condition is checked.
The algorithm exits the loop and returns~$\mathbf{D}_B$ if the terminal condition is met.
The proposed algorithm supports two terminal conditions: 1) termination based on memory usage and 2) termination based on max error~$e_\text{max}$.
Users can choose the terminal condition based on their specific applications.
For example, if the application requires performing similarity join on a machine with limited memory, then the user may wish to terminate the dictionary learning algorithm with condition 1.
If the quality of the approximate solution is more critical for the application, then the user should choose to terminate based on condition 2 instead.

In line 15, the distance profile~$S'$ is computed by querying time series~$T_B$ with the dictionary element added in the current iteration.
We use the MASS algorithm~\cite{mass} to compute the distance profile in our implementation.
Then, the distance profile~$S'$ is merged with~$S$ using element-wise minimum operation from line 16 to line 19.
The resulting merged distance profile~$S$ measures the distance between each subsequence with each element in the dictionary, and is used in the next iteration in line 7 to process~$P_B$.
Finally, the learned dictionary~$\mathbf{D}_B$ is returned in line 20.

For our implementation, the time complexity is $O(max(n^2, n_\text{iter} n\log n))$ where $n$ is the length of $T_B$ and $n_\text{iter}$ is number of iterations which depends on the terminal condition.
The $n^2$ term is from the intra-similarity join operation in line 3 and the $n_\text{iter} n\log n$ is from calling MASS algorithm\footnote{MASS algorithm has a $O(n\log n)$ time complexity~\cite{mass}.} $n_\text{iter}$ times in line 15.
Note, because we avoid selecting trivial matches of any previously added dictionary element using an exclusion zone of length~$m$ in line 10, the maximum possible $n_\text{iter}$ are $\frac{n}{m}$.
Because $m$ is usually greater than $\log n$, this would make $n_\text{iter} n\log n$ less than $n^2$, and the overall time complexity becomes $O(n^2)$.
The time complexity can be further reduced if the join operation in line 3 is replaced with an approximate intra-similarity join algorithm.
Similar to other matrix profile algorithms~\cite{yeh2018time,zhu2018matrix,zhu2020swiss}, the space complexity is $O(n)$.

\subsection{Similarity Join with Dictionary.}
\label{sec:similarity}
Once the dictionary~$\mathbf{D}_B$ is learned from $T_B$, we can use Algorithm~\ref{alg:dict_join} to compute the approximate inter-similarity join matrix profile~$\hat{P}_{AB}$ for any given time series~$T_A$.
The inputs to the algorithm are time series~$T_A$, dictionary~$\mathbf{D}_B$, and subsequence length~$m$.

\begin{algorithm}[ht]
    \centering
    \caption{Similarity Join with Dictionary\label{alg:dict_join}}
    \footnotesize
    \begin{algorithmic}[1]
        \Input{time series~$T_A$, dictionary~$\mathbf{D}_B$, subsequence length~$m$}
        \Output{matrix profile~$\hat{P}_{AB}$}
        \Function{JoinDictionary}{$T_A, \mathbf{D}_B, m$}
        \State $n \gets \textsc{ GetArraySize}(T_A)$
        \State $\hat{P}_{AB} \gets \textsc{ GetInfArrayOfSize}(n - m + 1)$
        \For{\textbf{each} $D_B$ \textbf{in} $\mathbf{D}_B$}
        \State $P_{AD} \gets T_A\bowtie_{\theta_{\textit{1NN}}, m}D_B$
        \State $\hat{P}_{AB} \gets \textsc{ ElementwiseMin}(\hat{P}_{AB}, P_{AD})$
        \EndFor
        \State \Return $\hat{P}_{AB}$
        \EndFunction
    \end{algorithmic}
\end{algorithm}

In line 2, the time series length~$n$ is inferred from the time series~$T_A$.
In line 3, we initialize the approximate matrix profile~$\hat{P}_{AB}$ as a vector of INFs of length $n - m + 1$.
From line 4 to line 6, the for-loop processes each $D_B \in \mathbf{D}_B$ to generate the approximate matrix profile~$\hat{P}_{AB}$.
In line 5, the inter-similarity join matrix profile between~$T_A$ and~$D_B$ is computed and stored in $P_{AD}$.
We use the algorithm outlined in~\cite{zimmerman2019matrixxiv} to perform the inter-similarity join.
Next, in line 6, $P_{AD}$ is combined with~$\hat{P}_{AB}$ using element-wise minimum.
The resulting~$\hat{P}_{AB}$ is returned in line 7.

\definecolor{time_series}{HTML}{990000}
\definecolor{query}{HTML}{339966}
\definecolor{anomaly}{HTML}{FFCC00}

\begin{figure}[htbp]
\centerline{
\includegraphics[width=0.99\linewidth]{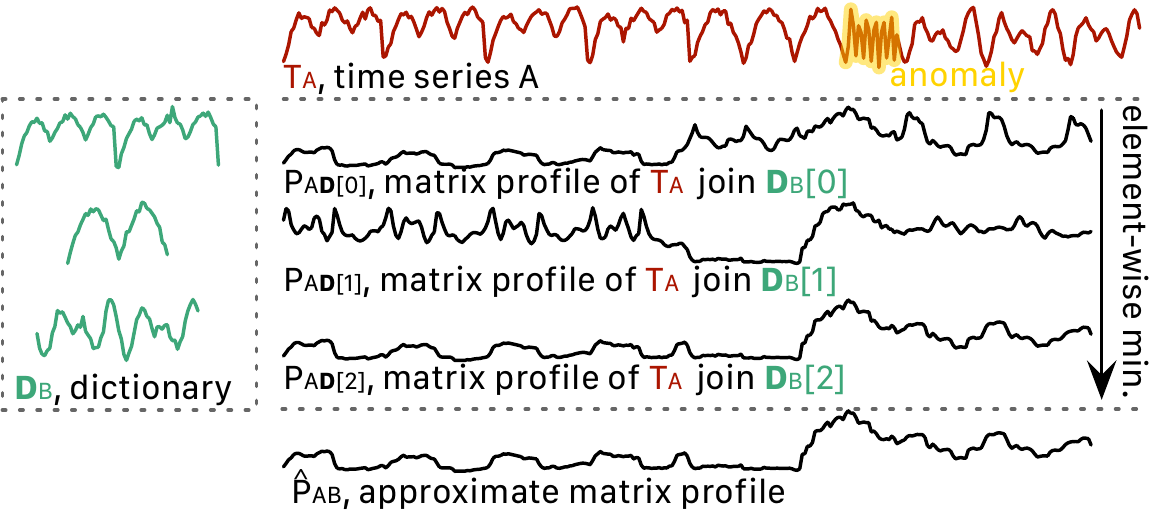}
}
\caption{
\footnotesize
The approximate matrix profile~$\hat{P}_{AB}$ is computed by combining the matrix profile between \textcolor{time_series}{$T_A$} and each element in \textcolor{query}{$\mathbf{D}_B$} with element-wise minimum.
The result~$\hat{P}_{AB}$ reveals an \textcolor{anomaly}{anomaly}.
}
\label{fig:dict_join}
\end{figure}

Fig.~\ref{fig:dict_join} provides a visual demonstration of Algorithm~\ref{alg:dict_join}.
In this example, the input time series~$T_A$ consists of three repeated patterns and one unique pattern (or \textit{anomaly}); the input dictionary~$\mathbf{D}_B$ happens to contain three elements, each corresponding to one of the three repeated patterns in~$T_A$.
Because each~$\mathbf{D}_B[i] \in \mathbf{D}_B$ consists of only one of the repeated patterns, each inter-similarity join matrix profile for each iteration (i.e., $P_{AD[i]}$) has low values associated with the corresponding repeated pattern.
By combining $P_{AD[0]}$, $P_{AD[1]}$, and $P_{AD[2]}$ with element-wise minimum, the result approximate matrix profile~$\hat{P}_{AB}$ reveals the anomaly.

The time complexity for our implementation is $O(|\mathbf{D}_B|n)$ where $|\mathbf{D}_B|$ is the number of data points in $\mathbf{D}_B$ and $n$ is the length of $T_A$.
Because the time complexity for line 5 is $O(|D_B|n)$ and $\sum_{D_B \in \mathbf{D}_B}{|D_B|} =|\mathbf{D}_B|$, the total time complexity is $O(|\mathbf{D}_B|n)$.

\subsection{Performance Guarantees.}
\label{sec:guarantee}
In this section, we demonstrate that the proposed method provides our two claimed assurances: 1) no false negative guarantee and 2) max error guarantee.
To help us use a more concise expression in the proofs, we further define the \textit{all-subsequences set} of a dictionary.

\begin{define}
    An \textit{all-subsequences set}~$\mathbf{D}_{B,m}$ of a dictionary~$\mathbf{D}_B$ is a set of all possible subsequences of $\mathbf{D}_B$ obtained by sliding a window of length~$m$ across each $D_B \in \mathbf{D}_B$, where $m$ is the subsequence length.
\end{define}

Note, this notation is created for explanation purpose only.
We do not actually compute such set in the proposed algorithms.
We also define the \textit{nearest neighbor distance function} for conciseness.

\begin{define}
The \textit{nearest neighbor distance function}~$\textsc{NNDist}(X, \mathbf{Y})$ is a function that finds the nearest neighbor of $X \in \mathbb{R}^{m}$ (a time series) in $\mathbf{Y}=\{Y_1, Y_2, ..., Y_n\}$ (a set of time series and $Y_i \in \mathbb{R}^{m}$); then return the distance between $X$ and the nearest neighbor.
\end{define}

\vspace{0.5em}
\noindent \textbf{[No False Negative Guarantee]}
To show that our method provides Guarantee~\ref{gua:false}, we need to prove Lemma~\ref{lem:false}.

\begin{lemma}
\label{lem:false}
For any given time series~$T_A$, if $P_{AB}$ and $\hat{P}_{AB}$ are defined using distance measure, then we have $\hat{P}_{AB}[i] \geq P_{AB}[i]$ for all $i$ and for any given~$T_A$.
\end{lemma}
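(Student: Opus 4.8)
The plan is to reduce the inequality to a simple set-containment fact together with the monotonicity of nearest-neighbor distance. The central observation is that every element $D_B$ of the dictionary $\mathbf{D}_B$ is, by construction in Algorithm~\ref{alg:dictionary} (line 13, together with the optional context window and the overlap-merging of Fig.~\ref{fig:dict_storage}), a contiguous subsequence of $T_B$. Consequently every length-$m$ window obtained by sliding across the dictionary elements --- i.e., every member of the all-subsequences set $\mathbf{D}_{B,m}$ --- is itself a length-$m$ subsequence of $T_B$, and therefore belongs to $\mathbf{B}_m$. This yields the containment
\begin{equation*}
\mathbf{D}_{B,m} \subseteq \mathbf{B}_m .
\end{equation*}
Neither the context window nor the merging step can break this, since both only extend or concatenate subsequences that already lie within $T_B$, so the argument is robust to those implementation details.

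Next I would express both matrix profiles as nearest-neighbor distances of the \emph{same} query over these two candidate sets. By Definition~\ref{def:matrix_pro}, for each index~$i$ we have $P_{AB}[i] = \textsc{NNDist}(\mathbf{A}_m[i], \mathbf{B}_m)$. For the approximate side I would unpack Algorithm~\ref{alg:dict_join}: the inner join $P_{AD}$ in line~5 records, for each~$i$, the distance from $\mathbf{A}_m[i]$ to its nearest length-$m$ window inside a single dictionary element $D_B$, and the element-wise minimum accumulated across all $D_B \in \mathbf{D}_B$ in line~6 therefore produces $\hat{P}_{AB}[i] = \textsc{NNDist}(\mathbf{A}_m[i], \mathbf{D}_{B,m})$. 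Verifying this identity --- that minimizing within each element and then across elements coincides with a single minimization over the union $\mathbf{D}_{B,m}$ --- is the one place that requires care, and I regard it as the main (if modest) obstacle, since it is what ties the operational element-wise-min algorithm to the clean set-theoretic picture.

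Finally I would invoke monotonicity of the nearest-neighbor distance under set restriction: for any query $X$ and any candidate sets with $\mathbf{Y}' \subseteq \mathbf{Y}$, shrinking the pool can only remove potential neighbors, so the minimizing distance cannot decrease, i.e.\ $\textsc{NNDist}(X, \mathbf{Y}') \geq \textsc{NNDist}(X, \mathbf{Y})$. Applying this with $X = \mathbf{A}_m[i]$, $\mathbf{Y}' = \mathbf{D}_{B,m}$, and $\mathbf{Y} = \mathbf{B}_m$ gives $\hat{P}_{AB}[i] \geq P_{AB}[i]$ for every~$i$. Since the argument used nothing about $T_A$ beyond that $\mathbf{A}_m[i]$ is an arbitrary query subsequence, the bound holds for any given~$T_A$, completing the proof.
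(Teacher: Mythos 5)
Your proposal is correct and follows essentially the same route as the paper: the paper first records (as Proposition~\ref{prop:false}) that $\mathbf{D}_{B,m} \subseteq \mathbf{B}_m$ because Algorithm~\ref{alg:dictionary} only extracts subsequences of $T_B$, and then derives $\hat{P}_{AB}[i] \geq P_{AB}[i]$ from that containment, just as you do. The only cosmetic difference is that the paper phrases the last step as a proof by contradiction while you invoke monotonicity of the nearest-neighbor distance under set restriction directly (and you additionally spell out why the element-wise minimum in Algorithm~\ref{alg:dict_join} equals $\textsc{NNDist}(\mathbf{A}_m[i], \mathbf{D}_{B,m})$, a detail the paper leaves implicit), but the logical content is identical.
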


Before we begin the proof, let us first establish Proposition~\ref{prop:false}.

\begin{proposition}
\label{prop:false}
Given a time series~$T_B$ and a subsequence length~$m$, the all-subsequences set~$\mathbf{D}_{B,m}$ of a dictionary~$\mathbf{D}_B$ returned by Algorithm~\ref{alg:dictionary} is always a subset of~$\mathbf{B}_m$ (i.e,. the all-subsequences set of~$T_B$ of length $m$).
\end{proposition}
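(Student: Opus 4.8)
The plan is to reduce the set-inclusion claim to a single structural invariant of Algorithm~\ref{alg:dictionary}: every element~$D_B$ that is ever placed in the dictionary~$\mathbf{D}_B$ is a \emph{contiguous} subarray of~$T_B$. Granting this invariant, the inclusion $\mathbf{D}_{B,m} \subseteq \mathbf{B}_m$ follows almost immediately, because sliding a length-$m$ window across a contiguous subarray of~$T_B$ can only ever produce length-$m$ contiguous subarrays of~$T_B$, and these are by definition members of~$\mathbf{B}_m$.

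First I would establish the invariant by induction on the iterations of the main loop (lines 6--19). The base case is the empty dictionary, which vacuously satisfies the invariant. For the inductive step, the only place a new element enters~$\mathbf{D}_B$ is line~12, where it is set to $T_B[j:j+m]$ for the index $j = \argmin P'_B$ selected in line~11; this is literally a length-$m$ contiguous subarray of~$T_B$, namely $T_{j,m} \in \mathbf{B}_m$. Hence after each iteration the invariant is preserved.

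Next I would fix an arbitrary $D_B \in \mathbf{D}_B$ and write $D_B = T_B[a:b]$ with $b - a \geq m$. Any length-$m$ subsequence produced by the sliding window in the definition of~$\mathbf{D}_{B,m}$ has the form $D_B[c:c+m] = T_B[a+c:a+c+m]$, which is the length-$m$ contiguous subarray $T_{a+c,m}$ of~$T_B$ and therefore an element of~$\mathbf{B}_m$. Since $\mathbf{D}_{B,m}$ is exactly the union of these windows over all dictionary elements, taking the union over all $D_B \in \mathbf{D}_B$ yields $\mathbf{D}_{B,m} \subseteq \mathbf{B}_m$, completing the argument.

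The hard part will be the bookkeeping for the two auxiliary storage operations described in the surrounding text but not written explicitly in the pseudocode: the optional context-window extension (which lengthens the stored window beyond its $m$ core points) and the merging of overlapping elements (Fig.~\ref{fig:dict_storage}). I would handle both as strengthenings of the inductive step: extending by context replaces $T_B[j:j+m]$ with a longer window that still sits contiguously inside~$T_B$ (truncated at the two ends of~$T_B$ in the boundary cases), and merging two overlapping contiguous subarrays of~$T_B$ produces the single contiguous subarray spanning both, which is again contiguous in~$T_B$. Note that merging may introduce extra windows straddling the join, but these too are contiguous subarrays of~$T_B$, so they land in~$\mathbf{B}_m$ and do not threaten the inclusion. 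Because every operation preserves ``contiguous subarray of~$T_B$,'' the invariant and hence the conclusion survive these refinements.
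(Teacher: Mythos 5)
Your proposal is correct and follows essentially the same route as the paper, which justifies the proposition in a single sentence: the dictionary is built by extracting subsequences of~$T_B$, so every length-$m$ window over a dictionary element is a contiguous subarray of~$T_B$ and hence lies in~$\mathbf{B}_m$. Your version is simply a careful expansion of that observation, with the added merit of explicitly checking the context-window extension and the merging of overlapping elements, two details the paper's one-line argument leaves implicit.
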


Proposition~\ref{prop:false} is true because Algorithm~\ref{alg:dictionary} constructs dictionaries by extracting subsequences from~$T_B$.
We will use Proposition~\ref{prop:false} to prove Lemma~\ref{lem:false}.

\begin{proof}
Assume, for some $i$, we have $\hat{P}_{AB}[i] < P_{AB}[i]$, then there is a subsequence $U \in \mathbf{D}_{B,m}$, has a smaller distance than any subsequence in $\mathbf{B}_m$ to the $i$th subsequence $T_A[i]$ in $T_A$.
Therefore, $U \notin \mathbf{B}_m$.
Since $\mathbf{D}_{B,m} \subset \mathbf{B}_m$ (Proposition~\ref{prop:false}) and $U \in \mathbf{D}_{B,m}$, $U \in \mathbf{B}_m$.
This is a contradiction.
Thus, for all $i$, we have $\hat{P}_{AB}[i] > P_{AB}[i]$.
\end{proof}

As Lemma~\ref{lem:false} holds for all $i$ and any $T_A$, the proposed method is guaranteed to have the no false negatives.

\vspace{0.5em}
\noindent \textbf{[Max Error Guarantee]}
In order to show our method provides Guarantee~\ref{gua:error}, we need to first define $e_{\text{max}}$.

\begin{define}
Given a time series $T_B$, subsequence length~$m$, and a dictionary~$\mathbf{D}_B$ learned from~$T_B$, we have $e_{\text{max}} \triangleq \max_{T_{i,m} \in \mathbf{B}_m} \textsc{NNDist}(T_{i,m}, \mathbf{D}_{B,m})$.
\end{define}

With $e_{\text{max}}$ defined, we need to show Lemma~\ref{lem:error} holds.

\begin{lemma}
\label{lem:error}
Given a time series $T_B$, subsequence length~$m$, and a dictionary~$\mathbf{D}_B$ learned from~$T_B$, for any time series $T_A \in \mathbb{R}^{n}$, $D_{AD}[i] - D_{AB}[i] \leq e_{\text{max}}$ for all $i \in [1, n - m + 1]$ where $D_{AD}[i] = \textsc{NNDist}(T_A[i], \mathbf{D}_{B,m})$ and $D_{AB}[i] = \textsc{NNDist}(T_A[i], \mathbf{B}_m)$.
\end{lemma}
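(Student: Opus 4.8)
The plan is to prove the bound pointwise in $i$ via the triangle inequality, with the definition of $e_{\text{max}}$ serving as the hinge. Fix an arbitrary index $i$ and write $Q = T_A[i]$ for the query subsequence, and let $d(\cdot,\cdot)$ denote the underlying ($z$-normalized Euclidean) distance. First I would unpack $D_{AB}[i]$: by definition of $\textsc{NNDist}$ there is a subsequence $V \in \mathbf{B}_m$ realizing the nearest neighbor distance, so that $d(Q, V) = D_{AB}[i]$.

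Next I would route $V$ through the dictionary. Since $V \in \mathbf{B}_m$, the definition of $e_{\text{max}} = \max_{T_{i,m} \in \mathbf{B}_m} \textsc{NNDist}(T_{i,m}, \mathbf{D}_{B,m})$ guarantees that $V$'s own nearest neighbor inside $\mathbf{D}_{B,m}$ is within distance $e_{\text{max}}$; concretely, there exists a dictionary subsequence $W \in \mathbf{D}_{B,m}$ with $d(V, W) \le e_{\text{max}}$. The triangle inequality then gives
$$d(Q, W) \le d(Q, V) + d(V, W) \le D_{AB}[i] + e_{\text{max}}.$$
Because $W$ is one particular element of $\mathbf{D}_{B,m}$, it is a feasible candidate for the nearest neighbor of $Q$ in the dictionary set, hence $D_{AD}[i] = \textsc{NNDist}(Q, \mathbf{D}_{B,m}) \le d(Q, W)$. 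Chaining these gives $D_{AD}[i] \le D_{AB}[i] + e_{\text{max}}$, i.e. $D_{AD}[i] - D_{AB}[i] \le e_{\text{max}}$. As $i$ and $T_A$ were arbitrary, the bound holds for every $i \in [1, n-m+1]$ and every $T_A$, which is exactly the claim.

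The main obstacle I anticipate is justifying the triangle inequality step: the argument is valid only because the distance defining the matrix profile is a genuine metric. For the $z$-normalized Euclidean distance this holds, since $z$-normalization is a fixed transformation after which ordinary Euclidean distance (which obeys the triangle inequality) is applied, so I would either state this as a standing assumption or cite the metric property of the $z$-normalized Euclidean distance. A secondary subtlety worth a one-line remark is that Proposition~\ref{prop:false} gives $\mathbf{D}_{B,m} \subseteq \mathbf{B}_m$, which keeps $e_{\text{max}}$ finite and well-defined and ensures the candidate $W$ exists; with an empty dictionary the statement would be vacuous.
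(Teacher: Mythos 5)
Your proof is correct and rests on the same core idea as the paper's: the triangle inequality applied to the query, its nearest neighbor in $\mathbf{B}_m$, and a dictionary element, with $\mathbf{D}_{B,m} \subseteq \mathbf{B}_m$ (Proposition~\ref{prop:false}) and the definition of $e_{\text{max}}$ closing the bound. However, your execution is tighter than the paper's, and in a way that matters. The paper labels the query $a$, its nearest neighbor in $\mathbf{B}_m$ as $b$, and its nearest neighbor in $\mathbf{D}_{B,m}$ as $c$, then asserts $|\vec{bc}| \leq e_{\text{max}}$. As literally written that assertion is unjustified: $e_{\text{max}}$ bounds the distance from $b$ to $b$'s \emph{own} nearest dictionary element, not to $c$, which is the dictionary element nearest to $a$; these can be different points, and $|\vec{bc}|$ can exceed $e_{\text{max}}$. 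Your proof routes through the correct intermediate point: you take $W$ to be the nearest dictionary neighbor of $V$ (the paper's $b$), for which $d(V,W) \leq e_{\text{max}}$ holds directly from the definition of $e_{\text{max}}$, and then you exploit that $W$ is merely a feasible, possibly suboptimal, candidate for the query's dictionary nearest neighbor, giving $D_{AD}[i] \leq d(Q,W)$. That two-step routing is exactly the patch the paper's wording needs (and is arguably what its Fig.~\ref{fig:proof2}, whose caption draws arrows from each point of $\mathbf{B}_m$ to its nearest dictionary element, is actually depicting). Your closing remark is also apt: the argument needs the $z$-normalized Euclidean distance to satisfy the triangle inequality, which holds because it is an ordinary Euclidean distance computed after a fixed normalization map; the paper glosses over this by silently treating subsequences as points in $\mathbb{R}^m$.
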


\begin{proof}
A subsequence of length $m$ can be viewed as a point in $\mathbb{R}^{m}$.
Since $\mathbf{D}_{B,m} \subset \mathbf{B}_m$, then every point in $\mathbf{D}_{B,m}$ coincides with one point in $\mathbf{B}_m$.
Then, for any possible $T_A[i] \in \mathbb{R}^{m}$, if the nearest neighbor of $T_A[i]$ in $\mathbf{B}_m$ is also in $\mathbf{D}_{B,m}$, then $D_{AD}[i] - D_{AB}[i] = 0$.
In another case, if the nearest neighbor of $T_A[i]$ in $\mathbf{B}_m$ is \textit{not} in $\mathbf{D}_{B,m}$ as shown in Fig.~\ref{fig:proof2}, suppose $T_A[i]$ is labeled as $a$, the nearest neighbor of $T_A[i] \in \mathbf{B}_m$ is labeled with $b$, and the nearest neighbor of $T_A[i]$ in $\mathbf{D}_{B,m}$ is labeled with $c$.
By triangle inequality $|\vec{ab}| - |\vec{ac}| < |\vec{bc}|$, and with $|\vec{bc}| \leq e_{\text{max}}$, thus $D_{AD}[i] - D_{AB}[i] \leq e_{\text{max}}$.
\end{proof}

\definecolor{d_m}{HTML}{990000}
\definecolor{b_m}{HTML}{339966}

\begin{figure}[htbp]
\centerline{
\includegraphics[width=0.99\linewidth]{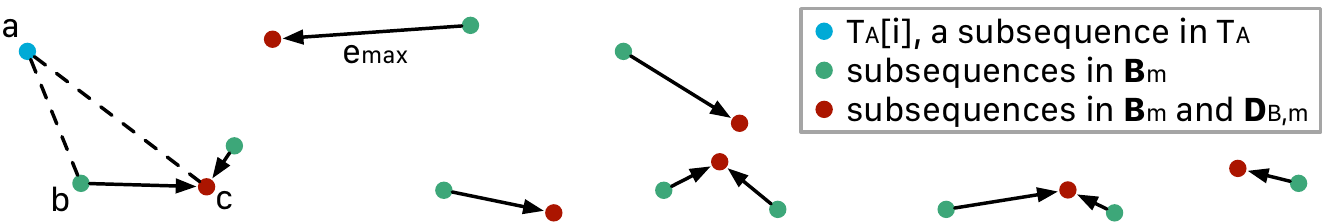}
}
\caption{
\footnotesize
Demonstration of the max error guarantee by triangle inequality.
The nearest neighbor of each \textcolor{b_m}{$\mathbf{B}_m[i] \in \mathbf{B}_m$} within the set \textcolor{d_m}{$\mathbf{D}_{b,m}$} is indicated with an arrow.
}
\label{fig:proof2}
\end{figure}

Because Lemma~\ref{lem:error} holds for any subsequence of length $m$, the approximation error of the proposed method is guaranteed to be bounded above by~$e_{\text{max}}$ for any given time series $T_A$.

\section{Empirical Evaluation }
\label{sec:experiment}
To ensure that our experiments are reproducible, we have built a website~\cite{supplementary} which contains all supplementary material for the results, in addition to some experiments that are omitted here for brevity.
Unless otherwise stated, all experiments were run with a 2.9 GHz Quad-Core Intel Core i7 processor (4 threads) and 16 GB ram.

\subsection{Sanity Check.}
To confirm the idea that different time series have different levels of ``compressibility" as hinted in Fig.~\ref{fig:compress}, here we learn dictionaries with Algorithm~\ref{alg:dictionary} using extended versions of the three time series shown in that figure, under different space saving factors.
The time series shown in Fig.~\ref{fig:compress} only contains the first thousand time steps, the full time series has eight thousand time steps.
We measure the max error~$e_\text{max}$ associated with each dictionary and summarize it in Fig.~\ref{fig:sanity_check}.

\begin{figure}[htbp]
\centerline{
\includegraphics[width=0.99\linewidth]{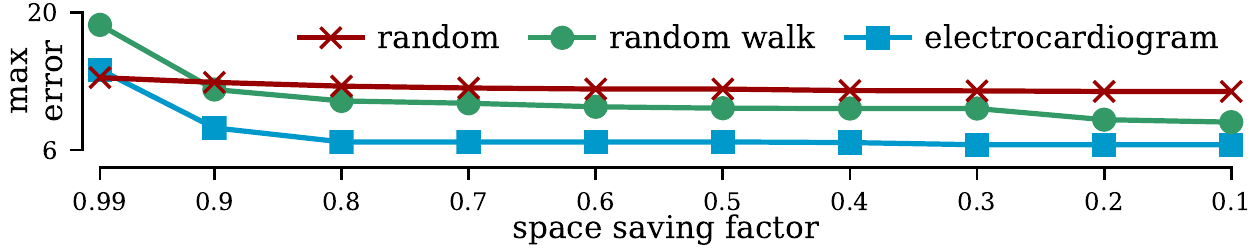}
}
\caption{
\footnotesize
The max error~$e_\text{max}$ associated with the most compressible time series (i.e., electrocardiogram) is the lowest of the thee when the space saving factor is smaller than or equal to 0.9.
}
\label{fig:sanity_check}
\end{figure}

First of all, the max error~$e_\text{max}$ is reduced as we are reducing the space saving factor.
Such correlation is expected as a larger dictionary is capable of capturing more information; therefore, the max error is lower.
Secondly, the max error~$e_\text{max}$ is essentially unchanged for random time series.
This is also expected as random time series does not contain any exploitable patterns.
The subsequences in a random time series do not correlate with each other, i.e., the Pearson correlation coefficient between them is always close to zero.
Third, the~$e_\text{max}$ associated with both random walk and ECG decreases as we increase the dictionary size.
The change is drastic for the lower space saving factors; then, the error stops improving when the space saving factor is greater than 0.8.
The ECG dictionary has the lowest~$e_\text{max}$ when the space saving is smaller than or equal to 0.9.

To empirically validate the derived theoretical max error~$e_\text{max}$, we join the dictionaries learned from ECG time series under different space saving factors with another much longer ECG time series of length 10 million.
Using the notation we presented in Section~\ref{sec:background} and Section~\ref{sec:algorithm}, the dictionaries~$\mathbf{D}_B$ are learned from the $8k$ length time series~$T_B$, and we join~$\mathbf{D}_B$ with the $10m$ length time series~$T_A$.
We compare the approximate matrix profile~$\hat{P}_{AB}$ with the exact matrix profile~$P_{AB}$ and compute the error vector~$E$ as the difference between the two matrix profiles, i.e., $\hat{P}_{AB} - P_{AB}$.
We summarize~$E$ using mean, standard deviation, maximum, and minimum as shown in Fig.~\ref{fig:maxerr_check}.

\begin{figure}[htbp]
\centerline{
\includegraphics[width=0.99\linewidth]{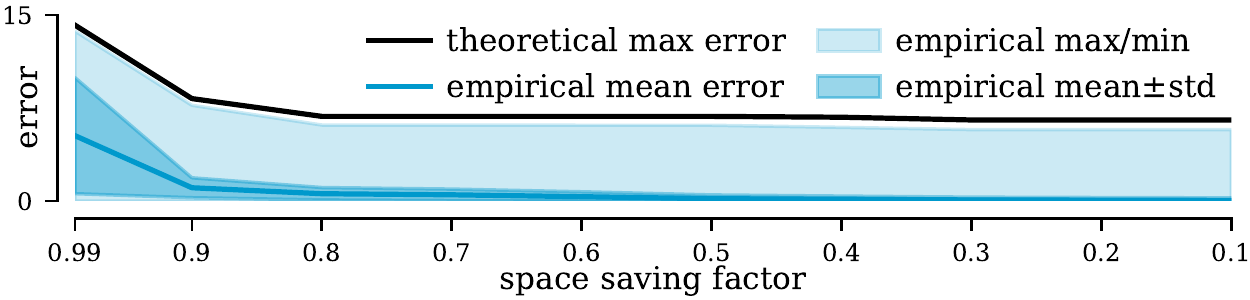}
}
\caption{
\footnotesize
The empirical error never exceeds the theoretical max error~$e_\text{max}$.
}
\label{fig:maxerr_check}
\end{figure}

The most important observation from the figure is that the empirically observed maximum error never exceeds the theoretical~$e_\text{max}$ (i.e., Guarantee~\ref{gua:error}).
In other words, in additional to the analytical result presented in Section~\ref{sec:guarantee}, we empirically validate the theoretical~$e_\text{max}$ is indeed the error bound.
Another observation is that the standard deviation reduced more drastically compared to the mean and the maximum error at lower space saving factors; the minimum error is almost always close to zero.
This suggests that for highly structured data like ECG, we can create an accurate approximation using only a small fraction of the original data.

To evaluate whether Algorithm~\ref{alg:dictionary} is capable of creating compact and representative dictionaries, we compare Algorithm~\ref{alg:dictionary} with the random sampling baseline\footnote{The random baseline randomly selects a subsequence from the time series and add the selected subsequence to the dictionary at each iteration.} using the 2017 Melbourne Pedestrian Dataset~\cite{melbournepedestrian}.
We follow the data processing step used in UCR Archive~\cite{ucrarchive2018} to process the original dataset into the UCR Archive format with 10 different classes.
Each class has 349 to 365 time series and each time series's length is 24.
In each trial, we randomly select one class as the background (majority) class and treat the other classes as the foreground (minority) class.
We concatenate all time series from the background class to form the background time series.
For each foreground class, we randomly select 2 to 16 time series using a random number generator, and insert the selected time series to the background time series.
Fig.~\ref{fig:accuracy_ts} depicts an example of the resulting time series.

\begin{figure}[htbp]
\centerline{
\includegraphics[width=0.99\linewidth]{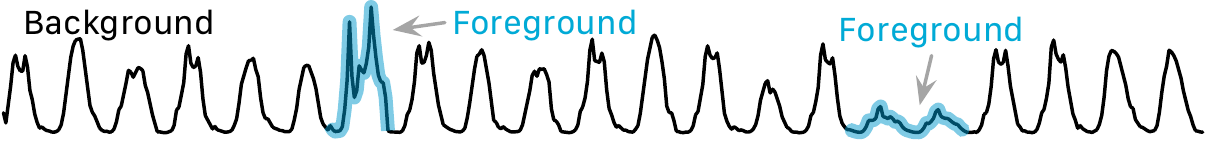}
}
\caption{
\footnotesize
Example time series extracted from Melbourne Pedestrian Dataset.
}
\label{fig:accuracy_ts}
\end{figure}

As both Algorithm~\ref{alg:dictionary} and the random baseline grow the dictionary incrementally, we run both methods until the dictionary captures time series from all 10 classes and record the associated space saving factor.
We use the computed space saving factors as a measure of a dictionary's quality.
We average over a thousand trials and summarize the results with the histogram shown in Fig.~\ref{fig:accuracy_test}.

\begin{figure}[htbp]
\centerline{
\includegraphics[width=0.99\linewidth]{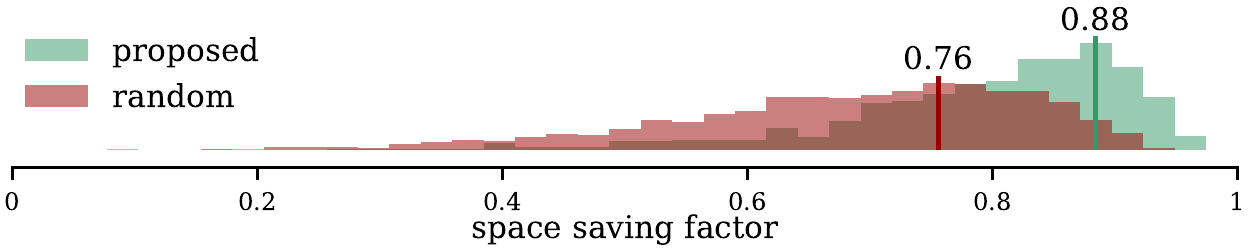}
}
\caption{
\footnotesize
The proposed method outperforms the random baseline.
}
\label{fig:accuracy_test}
\end{figure}

The proposed method generally has a much higher space saving factors (i.e., better compression) comparing to the random baseline.
The most populated bins for the proposed method and random baseline are 0.88 and 0.76, respectively.
Additionally, the proposed method performs more consistently as it has a more concentrated distribution comparing to the random baseline.

\subsection{Runtime.}
We presented two algorithms in Section~\ref{sec:algorithm}: dictionary building algorithm (i.e., Algorithm~\ref{alg:dictionary}) and dictionary join algorithm (i.e., Algorithm~\ref{alg:dict_join}).
Here, we test the runtime for both algorithms under different settings.
For all the experiments presented in this section, we set the number of threads to 4, we test the algorithms on the same ECG time series introduced in Fig.~\ref{fig:compress}.

\vspace{0.5em}
\noindent \textbf{[Dictionary Building Runtime]}
Based on our time complexity analysis, the runtime of Algorithm~\ref{alg:dictionary} is dependent on the length of input time series (i.e., $n$) and the number of iterations (i.e., $n_\text{iter}$).
We can easily change the length of input time series by using the first $n$ points as the input time series.
To vary $n_\text{iter}$, we use different terminal conditions based on the space saving factor.
We fix subsequence length~$m$ and contextual window factor~$k$ to 100 and 1.5, respectively.
The result runtime is presented in Fig.~\ref{fig:build_runtime}.

\begin{figure}[htbp]
\centerline{
\includegraphics[width=0.99\linewidth]{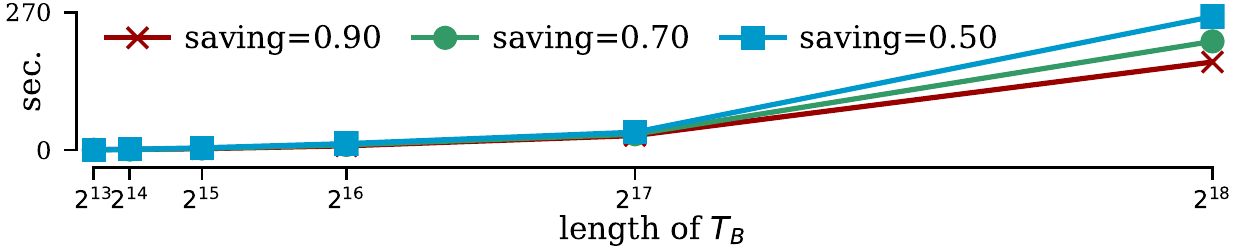}
}
\caption{
\footnotesize
Runtime for building dictionaries under different $T_B$ lengths/space saving factors.
}
\label{fig:build_runtime}
\end{figure}

Unsurprisingly, the runtime grows quadratically with the length of time series~$T_B$, which agrees with our analysis.
The runtime also decreases when the space saving factor increases as less iterations are required to construct a smaller dictionary.
To examine the relationship between dictionary building time and space saving factor closely, we set the length of $T_B$ to $2^{18}$ and measure the runtime on various space saving factors.
The result is presented in Fig.~\ref{fig:build_runtime_iter}.

\begin{figure}[htbp]
\centerline{
\includegraphics[width=0.99\linewidth]{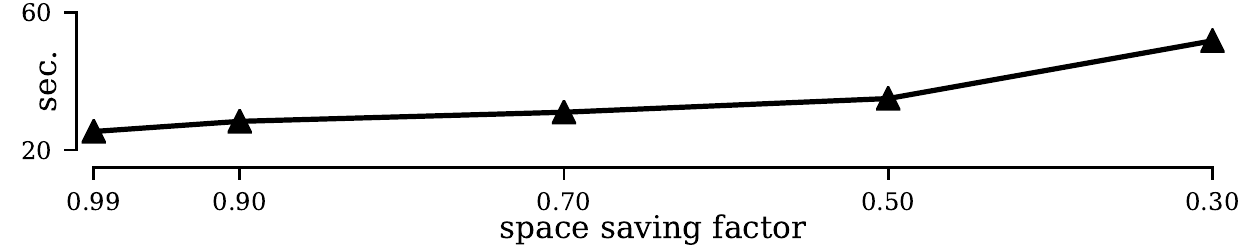}
}
\caption{
\footnotesize
Runtime of building dictionaries with different space saving factors.
}
\label{fig:build_runtime_iter}
\end{figure}

The runtime for building dictionaries generally increases as the desired dictionary is larger (i.e., smaller space saving factor).
From Fig.~\ref{fig:build_runtime_iter}, we notice that the runtime difference below space saving factor of 0.5 is much larger comparing to the runtime difference above 0.5.
In other words, the reduction on runtime is much noticeable for larger space saving factors comparing to smaller space saving factors.

Next, we evaluate the effect of varying subsequence lengths on runtime.
For this set of experiments, we set the length of $T_B$ to $2^{18}$ and space saving factor to 0.5 and the result is shown in Fig.~\ref{fig:build_runtime_m}.

\begin{figure}[htbp]
\centerline{
\includegraphics[width=0.99\linewidth]{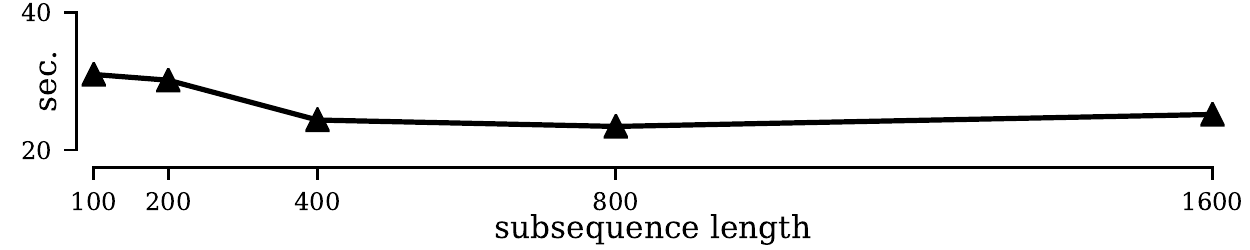}
}
\caption{
\footnotesize
Runtime of building dictionaries with different subsequence lengths.
}
\label{fig:build_runtime_m}
\end{figure}

The dictionary runtime reduces as the subsequence length increases.
Such phenomenon occurs because increased subsequence length actually reduces the number of subsequences.
This observation is consistent with other matrix profile algorithms~\cite{yeh2018time,zhu2018matrix,zhu2020swiss} where increasing dimensionality (i.e., subsequence length) actually is (very slightly) favorable in terms of the runtime.

\vspace{0.5em}
\noindent \textbf{[Dictionary Join Runtime]}
Similar to Algorithm~\ref{alg:dictionary}, the length of time series impacts the runtime noticeably.
In this case, there are two time series: the time series where the learned dictionary~$\mathbf{D}_B$ is built from (i.e., $T_B$) and the time series that~$\mathbf{D}_B$ is joined with (i.e., $T_A$).
The other factor that also affects the runtime is the size of the dictionary (i.e., $|\mathbf{D}_B|$).
To control the time series lengths, we fix one at $2^{17}$ while varying the length of the other time series.
To vary the dictionary size, we once again use the space saving factor as the terminal condition.
Fig.~\ref{fig:join_runtime} summarizes the resulting runtime.
We also include the runtime of exact inter-similarity join algorithm to demonstrate the runtime benefit offered by the proposed method.

\begin{figure}[htbp]
\centerline{
\includegraphics[width=0.99\linewidth]{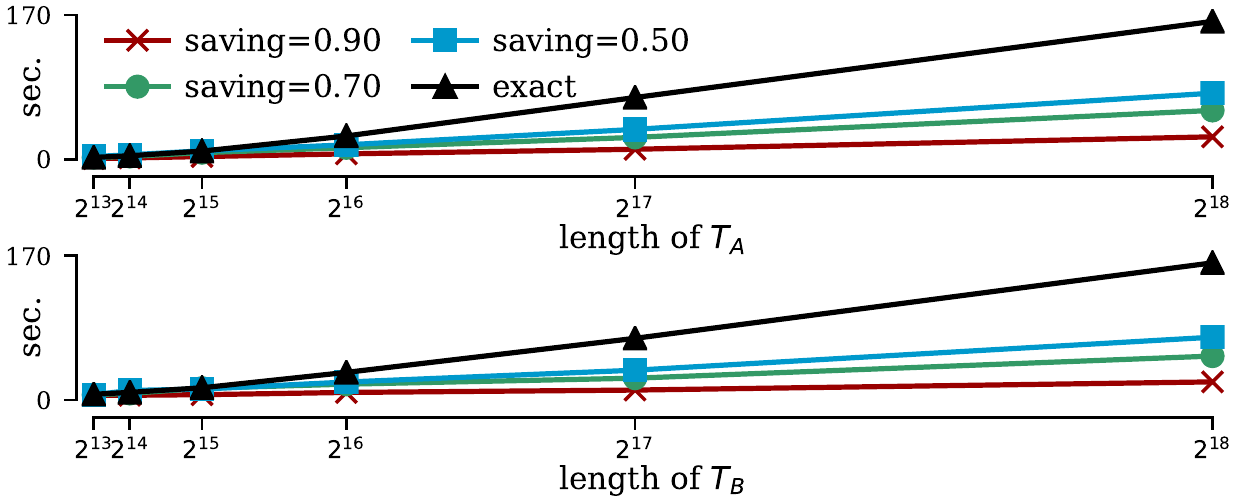}
}
\caption{
\footnotesize
Runtime for joining dictionaries with another time series under different time series lengths/space saving factors.
}
\label{fig:join_runtime}
\end{figure}

The reduction of runtime using the proposed approximate inter-similarity join algorithm is significant.
By setting the space saving factor to 0.5, the runtime is also reduced to roughly 50\% of the exact join in both line charts in Fig.~\ref{fig:join_runtime}.
Note, because the space saving factor for each line in the bottom chart of Fig.~\ref{fig:join_runtime} is fixed, changing the length of~$T_B$ directly varies the size of dictionary (i.e., $|\mathbf{D}_B|$).
On top of that, the length of~$T_A$ is $n$ in this case.
Given that the time complexity for Algorithm~\ref{alg:dict_join} is $O(|\mathbf{D}_B|n)$, the lengths of both~$T_B$ and~$T_A$ have a linear relationship with the runtime based on our big-O analysis, and the experiment result confirms such relationship.

We further examine the relationship between the space saving factor and runtime.
We set the lengths of both $T_B$ and $T_A$ to $2^{17}$ and measure the runtime on more space saving factors.
The result is presented in Fig.~\ref{fig:join_runtime_iter}.

\begin{figure}[htbp]
\centerline{
\includegraphics[width=0.99\linewidth]{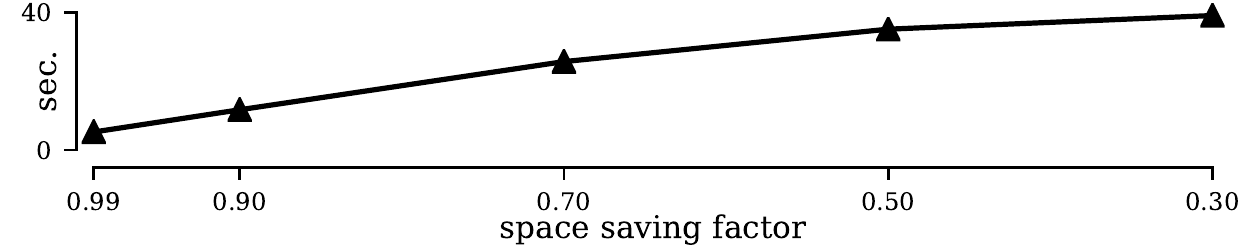}
}
\caption{
\footnotesize
Runtime for joining dictionaries with another time series under different space saving factors.
}
\label{fig:join_runtime_iter}
\end{figure}

The overall trend is very similar to Fig.~\ref{fig:build_runtime_iter}.
The key difference between Fig.~\ref{fig:join_runtime_iter} and Fig.~\ref{fig:build_runtime_iter} is the time difference in terms of percentage.
The runtime percentage reduction by increasing the space saving factor from 0.3 to 0.99 is 87\% for Fig.~\ref{fig:join_runtime_iter} while the percentage is 51\% for Fig.~\ref{fig:build_runtime_iter}.
This difference exists because the join operation in line 3 of Algorithm~\ref{alg:dictionary} does not depend on the setting of space saving factor while the loop from line 6 through line 20 depends on the setting of space saving factor.
In other words, Algorithm~\ref{alg:dictionary} always needs to pay a ``fixed" computational cost independent of the reduction in dictionary size.
While the entirety of Algorithm~\ref{alg:dict_join} benefits from smaller dictionaries in terms of runtime.

Finally, we measure the runtime of Algorithm~\ref{alg:dict_join} under different subsequence length settings.
We set the length of both $T_B$ and $T_A$ to $2^{17}$ and space saving factor to 0.5.
The result is presented in Fig.~\ref{fig:join_runtime_m}.

\begin{figure}[htbp]
\centerline{
\includegraphics[width=0.99\linewidth]{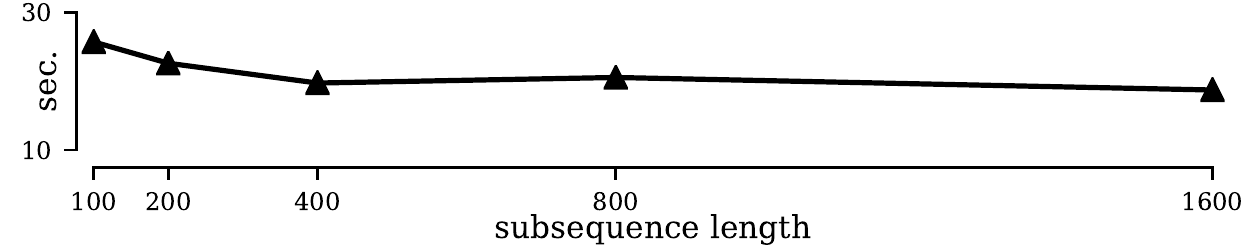}
}
\caption{
\footnotesize
Runtime for joining dictionaries with another time series under different subsequence lengths.
}
\label{fig:join_runtime_m}
\end{figure}

We can observe similar phenomena as in Fig.~\ref{fig:build_runtime_m} since both Algorithm~\ref{alg:dictionary} and Algorithm~\ref{alg:dict_join} use the same exact similarity join algorithm~\cite{zimmerman2019matrixxiv} as their subroutine.
The runtime generally decreases as the subsequence length increases.

\subsection{Case Study: Anomaly Detection.}
We evaluate the proposed method's anomaly detection capability with electrocardiogram (ECG) time series from MIT-BIH Long-Term ECG Database~\cite{goldberger2000physiobank}.

To be clear, the effectiveness of using time series discords has been established in many other works~\cite{yeh2018time,nakamura2020merlin}, here we are mostly interested in the implications of our work for efficiency.

For each patient, we use the first 1 million data points as the training time series, and the rest as the test time series.
There are seven patients in total, and the length of the test time series varies across patients from 5 million to almost 10 million.
We process the training time series by removing all the abnormal heartbeats (these have been annotated as such by a combination of algorithms and human expert annotation).
Dictionaries are learned from the processed training time series under different space saving factors; the resulting dictionaries are then joined with the test time series.
For the baseline of zero space saving factor, the processed training time series is directly joined with the test time series.

We use the inter-similarity join matrix profile value as the anomaly score for each subsequence, and we compute the \textit{area under the receiver operating characteristic curve} (AUC) to measure the quality of the anomaly score when comparing to the ground truth labels.
We compute the AUC and throughput under different space saving factors.
To summarize the results from seven patients, we average the AUC percentage change and throughput relative to the baseline (i.e., zero space saving factor) from each patient.
Note, because both the AUC percentage change and the throughput are computed relative to the baseline, the AUC percentage change and throughput corresponding to the baseline are 0\% and 1, respectively.
The experiment result is shown in Fig.~\ref{fig:anomaly}.

\begin{figure}[htbp]
\centerline{
\includegraphics[width=0.99\linewidth]{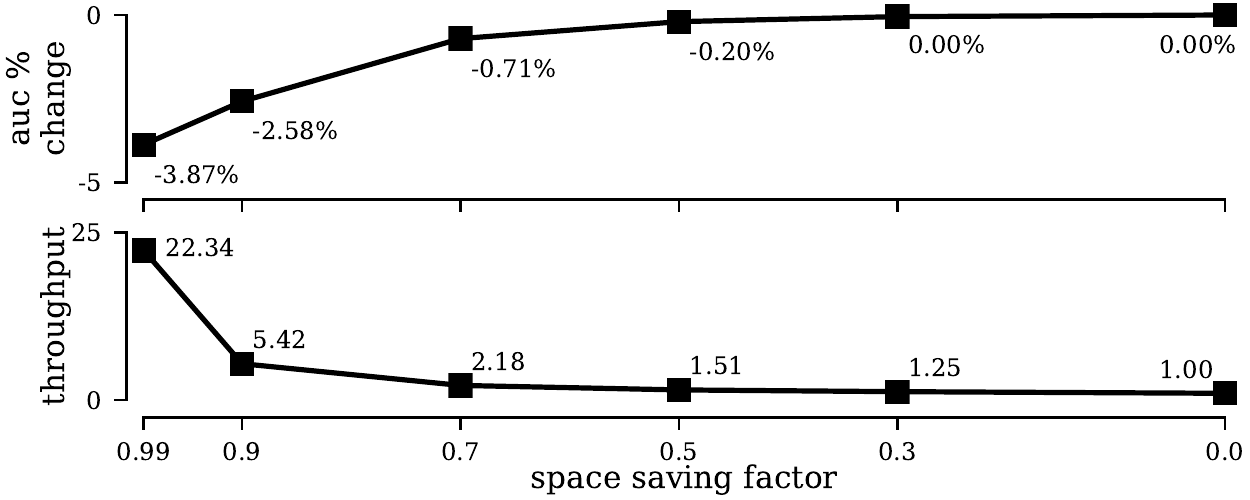}
}
\caption{
\footnotesize
When the space saving factor is set to 0.99, the percentage drop in AUC is less than 4\% and the the throughput is improved by at least 20X comparing to the baseline exact similarity join.
}
\label{fig:anomaly}
\end{figure}

It is unsurprising that the AUC decreases when we increase the space saving factor as we are using less and less resources; however, even with 99\% of the data removed, the AUC only drops 3.87\% on average.
When we examine the corresponding average throughput, it suggests that the approximate join algorithm is able to process 22.34 times more data compared to performing an exact join.
To put this speedup into context, exact join takes 30 minutes to process a million data points using the machine we performed the experiment.
If we replace the exact join with the proposed approximate join with a 0.99 space saving, the new system only takes 1.3 minutes to process the same data.
In other words, the exact join method can only handle a maximum sampling rate of approximately 500 Hz, while the proposed method is capable of handling 10,000 Hz (or twenty 500 Hz time series simultaneously) on the same machine.
Not to mention that the reduction in memory usage provides users the possibility of deploying the method on low-cost embedded systems.

Next, we take an excerpt of the heartbeat time series from one of the patients and visualize the output of our algorithm (with 0.99 space saving factor dictionary) along with the time series.
In this heartbeat sequence, there are two premature ventricular contraction (PVC/abnormal) heartbeats, labeled with ``V", in addition to the normal heartbeats.
Fig.~\ref{fig:anomaly_ex} shows the heartbeat time series.

\begin{figure}[htbp]
\centerline{
\includegraphics[width=0.99\linewidth]{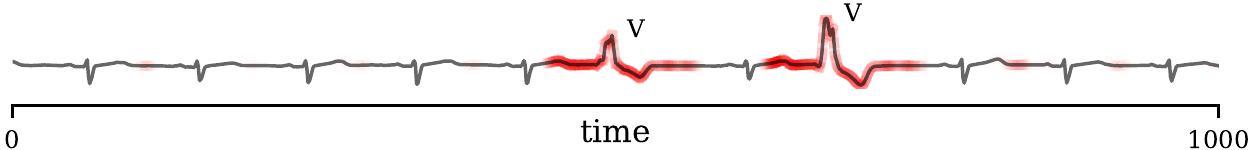}
}
\caption{
\footnotesize
The opacity of the \textcolor{red}{highlight} corresponds to the associated anomaly score (i.e., matrix profile value) at different temporal locations.
}
\label{fig:anomaly_ex}
\end{figure}

We highlight the time series based on the matrix profile outputted by the approximate algorithm where the opacity of the highlight is proportional to the corresponding matrix profile value at different temporal locations.
From Fig.~\ref{fig:anomaly_ex}, we can observe that the two PVC heartbeats are associated with high matrix profile values.
In other words, the proposed method, despite only using 1\% of the training data, still provides high-quality discriminative anomaly scores.

Finally, we compare similarity join-based method with deep learning-based methods.
We have implemented autoencoder with four popular architectures: multi-layer perceptron (MLP), convolutional neural network (CNN), recurrent neural network with long short-term memory (LSTM), and recurrent neural network with gated recurrent unit (GRU)~\cite{chalapathy2019deep}.
The implementation details regarding these methods can be found in the supporting website~\cite{supplementary}.
To have a fair comparison, we test our similarity join-based method using 4 different dictionary size settings: 0.18 MB, 0.38 MB, 0.51 MB, and 0.64 MB, matching the size of the four deep learning-based methods: MLP, CNN, GRU, and LSTM, respectively.
The resulting critical difference (CD) diagram comparing different methods over the 7 patients is shown in Fig.~\ref{fig:anomaly_cd}.

\begin{figure}[htbp]
\centerline{
\includegraphics[width=0.75\linewidth]{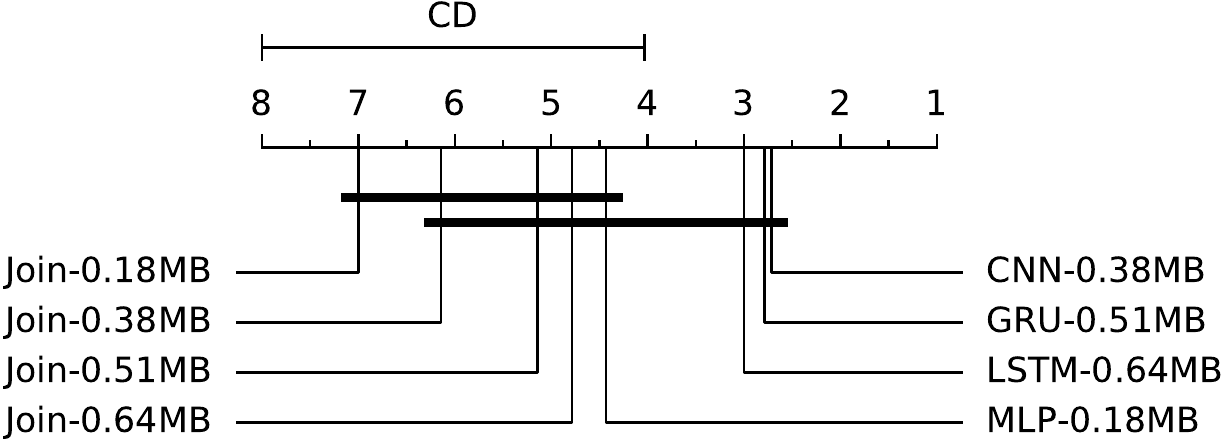}
}
\caption{
\footnotesize
The average rank (based on AUC) over the 7 patients for each method is marked on the horizontal axis.
The horizontal solid bars beneath the axis represent cliques, i.e. groups of methods within which there is no significant difference as determined by the pairwise Nemenyi test.
The performance difference between similarity join-based methods and deep learning-based methods is not significant when the model (dictionary) size is similar.
}
\label{fig:anomaly_cd}
\end{figure}

Overall, the differences among the tested methods are not significant when comparing methods with similar model (dictionary) sizes.
Deep learning-based methods generally have higher average rankings but do not provide the same performance guarantees as the proposed method.

\subsection{Case Study: Summarization.}
\label{sec:summary}
Thus far we have discussed our dictionary construction algorithm solely in terms of a technique to accelerate joins.
However, the readers will appreciate that the contents of the dictionary (and its metadata) may contain useful \textit{semantic} information about the dataset.
In a sense, the objective function optimized by the dictionary construction algorithm is similar to the objective function optimized by time series snippets~\cite{imani2018matrix}, and to a lesser extent, by clustering algorithms.

\definecolor{bundle_branch}{HTML}{F67F10}
\definecolor{premature_ventricular}{HTML}{2EA02C}
\definecolor{normal_beat}{HTML}{D62728}
\definecolor{tachycardia_beat}{HTML}{9467BD}

More explicitly, a side effect of creating a dictionary is that optimizing the dictionary's compression rate tends to summarize the data, by selecting one example of each unique behavior in the data.
To see this, we consider a four-minute snippet of an electrocardiogram (ECG)~\cite{greenwald1986development}.
The data was originally annotated with ``\textit{For the first 125 seconds the patient has \textcolor{bundle_branch}{Bundle branch block beats}, then there is a burst of \textcolor{premature_ventricular}{Premature ventricular contractions} lasting for about 28 seconds before the patient settles back to a mixture of \textcolor{normal_beat}{Normal} and \textcolor{tachycardia_beat}{Tachycardia} beats}."
In Fig.~\ref{fig:summary}, we show the result of building a dictionary on this dataset.

\begin{figure}[htbp]
\centerline{
\includegraphics[width=0.99\linewidth]{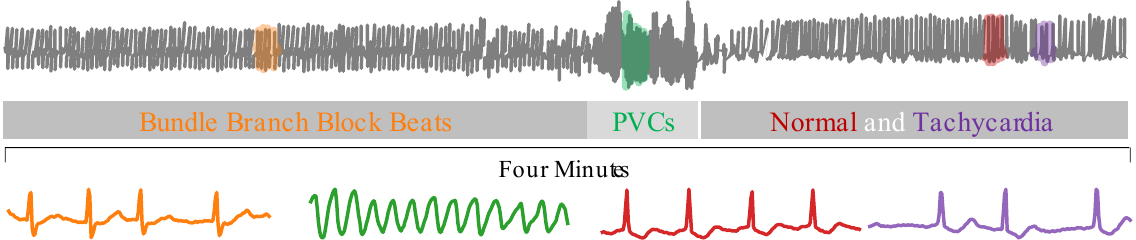}
}
\caption{
\footnotesize
\textit{top}) A four-minute 50 Hz ECG from a 75-year-old male undergoing cardiac surgery.
\textit{bottom}) The four elements of the dictionary with 12 to 1 compression.
}
\label{fig:summary}
\end{figure}

The four dictionary elements here exactly correspond to the four classes of beats that the original annotator had noted.
Moreover, this is a robust finding as we get essentially the same results even if we truncate some of the prefix/suffix, add noise etc.
\section{Conclusion}
\label{sec:conclusion}
In addition to the reduction in space/time complexity, the proposed approximate inter-time series similarity join method is guaranteed to 1) have no false negative and 2) have a proven max error.
The benefit of the method has been demonstrated with experiments in anomaly detection with heartbeat time series.
The proposed method has similar accuracy comparing to exact joins but the throughput is improved for more than 20X.
Also, the dictionary representation, that comes with the proposed method, provides a way to summarize long time series and allows the users to compare the dictionary representation in place of the original time series.
Our claims are verified with experiments on medical and transportation datasets.

\bibliographystyle{IEEEtran}

\end{document}